\newcommand{\ben}{\begin{enumerate}}
\newcommand{\een}{\end{enumerate}}
\newcommand{\beq}{\begin{equation}}
\newcommand{\eeq}{\end{equation}}
\newcommand{\bde}{\begin{description}}
\newcommand{\ede}{\end{description}}
\newcommand{\argmin}{\operatornamewithlimits{argmin}}
\newcommand{\norm}[1]{\lVert#1\rVert}
\newtheoremstyle{slplain}
  {1\baselineskip\@plus.2\baselineskip\@minus.2\baselineskip}
  {.5\baselineskip\@plus.2\baselineskip\@minus.2\baselineskip}
  {\slshape}
  {}
  {\bfseries}
  {.}
  { }
  {}
\theoremstyle{slplain}
\newtheorem{theorem}{Theorem}
\newtheorem{definition}[theorem]{Definition}
\newtheorem{proposition}[theorem]{Proposition}
\begin{document}

\title{A Laplace Mixture Representation of the Horseshoe and Some Implications}

\author{Ksheera Sagar and Anindya Bhadra
\thanks{Submitted on 11/28/2022. Bhadra is supported by US National Science Foundation Grant DMS-2014371.}
\thanks{The authors are with the Department of Statistics, Purdue University, IN 47906, United States of America  (e-mails: kkeralap@purdue.edu; bhadra@purdue.edu).}}
\maketitle

\begin{abstract}
The horseshoe prior, defined as a half Cauchy scale mixture of normal, provides a state of the art approach to Bayesian sparse signal recovery. We provide a new representation of the horseshoe density as a scale mixture of the Laplace density, explicitly identifying the mixing measure. Using the celebrated Bernstein--Widder theorem and a result due to Bochner, our representation immediately establishes the complete monotonicity of the horseshoe density and strong concavity of the corresponding penalty. Consequently, the equivalence between local linear approximation and expectation--maximization algorithms for finding the posterior mode under the horseshoe penalized regression is established. Further, the resultant estimate is shown to be sparse.
\end{abstract}

\begin{IEEEkeywords}
Bayesian sparse signal recovery, Bernstein--Widder, completely monotone function, local linear approximation
\end{IEEEkeywords}

\IEEEpeerreviewmaketitle

\section{Introduction}
\IEEEPARstart{F}{or} response $y$, covariates $\Phi$ and regression coefficients $x$, consider the high-dimensional linear regression model:
\begin{eqnarray}
y=\Phi x+ \epsilon,\label{regression_model}
\end{eqnarray}
where $y=\{y_j\}\in \mathbb{R}^n, \Phi=\{\phi_{ji}\}\in \mathbb{R}^{n\times p}, x=\{x_i\}\in\mathbb{R}^{p}$ with $p\gg n$ and $\epsilon \sim \mathcal{N}(0,\sigma^2\mathbf{I}_n)$.
The horseshoe prior hierarchy \cite{carvalho2010horseshoe} is defined for $i=i,\ldots,p$ as a half Cauchy scale mixture of normal as:
\begin{equation}
x_i \mid \lambda_i, \tau \stackrel{ind}\sim \mathcal{N}(0,\lambda_i^2\tau^2),\; \lambda_i\stackrel{ind}\sim\mathcal{C}^{+}(0,1), \; \tau>0, \label{eq:prior}
\end{equation}
where $\mathcal{C}^{+}$ denotes a standard half Cauchy random variable with density $p(\lambda_i) = (2/\pi)(1+ \lambda_i^2)^{-1},\; \lambda_i>0$ and $\tau$ is a hyperparameter. This prior has found numerous applications in sparse signal recovery\cite{carvalho2009handling, carvalho2010horseshoe, li2019graphical, yu2019variational, bhadra2019lasso, bhadra2020horseshoe, lee2020variation, bhadra2021horseshoe, li2021joint}. Although there is no closed form to the marginal horseshoe density, obtained by integrating out the latent $\lambda_i$s, the main benefits of the horseshoe stem from two crucial properties of the marginal prior. The first is infinite prior mass at the origin, allowing strong shrinkage towards zero in a sparse regime; and the second is heavy, polynomially decaying tails, aiding detection of signals \cite{carvalho2010horseshoe}. 

Since the tails of the horseshoe prior follow a quadratic rate of decay, it is conceivable to expect the prior density to admit a mixture representation with respect to the double exponential (Laplace) density as well. Yet, to our knowledge, no such explicit representation is available in the literature. We close this gap by showing the horseshoe density can also be represented as a mixture of double exponential with respect to a random variable whose density is proportional to the Dawson function. This representation, in turn, allows us to appeal to the celebrated Bernstein--Widder theorem \cite{widder2015laplace} and a result due to Bochner \cite{bochner2005harmonic} to establish complete monotonicity of the horseshoe density and  strong concavity of the horseshoe penalty, both new results. Further, appealing to the same mixture representation, we are able to establish an equivalence between local linear approximation or LLA \cite{zou2008one,polson2016mixtures} and expectation--maximization or EM \cite{dempster1977maximum} schemes for finding the maximum a posteriori (MAP) estimate under the horseshoe prior for sparse signal recovery.

\section{A Laplace Mixture Representation of the Horseshoe}
\begin{proposition}
\label{prop_equiv_HS_Laplace}
The marginal horseshoe density for a scalar random variable ($p=1$) admits the following representation as a Laplace mixture:
\begin{eqnarray*}
p_{HS}(x) &=& \frac{{2}}{\pi\sqrt{\pi}\tau} \int_0^\infty \exp\left(-u\frac{|x|}{\tau}\right) D_{+}\left(\frac{u}{\sqrt{2}}\right) du,
\end{eqnarray*}
where $D_{+}(z)=\exp(-z^2)\int_0^z\exp(t^2)dt,\; z>0,$ is the Dawson function.
\end{proposition}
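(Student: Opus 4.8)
The plan is to start from the known scale-mixture-of-normals form of the horseshoe and re-expose it as a scale-mixture-of-Laplace by introducing an auxiliary integral representation of the Gaussian kernel. Concretely, for $p=1$ the marginal density is
\begin{eqnarray*}
p_{HS}(x) = \int_0^\infty \frac{1}{\sqrt{2\pi}\,\lambda\tau}\exp\!\left(-\frac{x^2}{2\lambda^2\tau^2}\right)\frac{2}{\pi}\frac{1}{1+\lambda^2}\,d\lambda.
\end{eqnarray*}
The first step is to replace the Gaussian factor $\exp(-x^2/2\lambda^2\tau^2)$ by an integral over a Laplace kernel. The natural tool is the classical identity expressing a Gaussian density as a scale mixture of Laplace densities against a one-sided stable-type weight; equivalently, one writes $\exp(-a^2 s^2)$ as $\int_0^\infty \exp(-s t)\,g_a(t)\,dt$ for the appropriate $g_a$, or one uses the subordination identity $\exp(-|z|) = \int_0^\infty (z^2/(4\pi v^3))^{1/2}\exp(-z^2/4v - v)\,dv$ read in reverse. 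After substituting such a representation, I would have a double integral in $(\lambda, t)$ (or $(\lambda, v)$), with the $x$-dependence entirely inside $\exp(-t|x|/\tau)$ — i.e. already in Laplace-mixture form — and the mixing density given by an inner integral over $\lambda$.

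The second step is to evaluate that inner integral over $\lambda\in(0,\infty)$ explicitly and recognize it as a multiple of the Dawson function $D_{+}(t/\sqrt2)$. This is where the half-Cauchy factor $(1+\lambda^2)^{-1}$ interacts with the stable-type weight; the integral over $\lambda$ should reduce to something of the form $\int_0^\infty (1+\lambda^2)^{-1}\exp(-c/\lambda^2)\lambda^{-k}\,d\lambda$ or, after $\lambda \mapsto 1/\lambda$, $\int_0^\infty (1+\lambda^2)^{-1}\exp(-c\lambda^2)\lambda^{k-2}\,d\lambda$, a Laplace-type transform of a rational-times-Gaussian integrand. Such integrals are known to produce the Dawson function (or equivalently the scaled complementary error function of imaginary argument, or the confluent hypergeometric ${}_1F_1$); I would match constants by using $D_+(z) = \tfrac{1}{2}\sqrt{\pi}\,e^{-z^2}\mathrm{erfi}(z)$ and the standard Laplace-transform pair for $e^{-z^2}\mathrm{erfi}(z)$.

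The main obstacle is this inner $\lambda$-integration and the accompanying constant bookkeeping: choosing the variant of the Gaussian-to-Laplace representation that makes the $\lambda$-integral cleanly closed-form, performing the substitution carefully (the $\lambda \mapsto 1/\lambda$ inversion changes the power of $\lambda$ in a way that must line up with the weight), and then identifying the result as exactly $D_+(u/\sqrt2)$ with the stated prefactor $2/(\pi\sqrt\pi\,\tau)$. A secondary but routine point is justifying the interchange of the two integrals by Tonelli's theorem, which is immediate since all integrands are nonnegative. Once the inner integral is identified, the outer variable is relabeled $u$ and the claimed formula follows; as a sanity check I would verify that setting $x=0$ recovers $p_{HS}(0) = \infty$ (consistent with the known infinite spike at the origin, since $\int_0^\infty D_+(u/\sqrt2)\,du$ diverges because $D_+(z)\sim 1/(2z)$ as $z\to\infty$), and that the tail behavior as $|x|\to\infty$ matches the known $x^{-2}$ decay.
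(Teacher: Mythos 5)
There is a genuine gap at your very first step, and it is fatal to the route as written. You propose to ``replace the Gaussian factor $\exp(-x^2/2\lambda^2\tau^2)$ by an integral over a Laplace kernel,'' i.e.\ to write $\exp(-a^2s^2)=\int_0^\infty \exp(-st)\,g_a(t)\,dt$ with a nonnegative weight $g_a$. No such representation exists: by the Bernstein--Widder theorem (Theorem~\ref{th:bernstein} of this paper), a nonnegative-weight Laplace mixture in $s=|x|$ must be completely monotone in $s$, whereas $\frac{d^2}{ds^2}e^{-a^2s^2}=(4a^4s^2-2a^2)e^{-a^2s^2}<0$ for small $s$, so the Gaussian kernel is not even convex near the origin. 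The situation does not improve by allowing signed weights: an absolutely convergent Laplace transform $\int_0^\infty e^{-st}g(t)\,dt$ is bounded on every vertical line $\mathrm{Re}(s)=c$ in its half-plane of convergence, while $\lvert e^{-a^2(c+iy)^2}\rvert=e^{a^2(y^2-c^2)}\to\infty$, so $e^{-a^2s^2}$ is not the Laplace transform of any measure on $(0,\infty)$. The classical identity you cite (the inverse-Gaussian/one-sided stable subordination formula) goes only in the opposite direction --- it writes the double exponential as a normal scale mixture --- and it cannot be ``read in reverse'' to express a Gaussian as a Laplace mixture. Consequently your double integral in $(\lambda,t)$ never materializes, the Tonelli step has nothing to apply to, and the inner $\lambda$-integral you hope to identify with $D_+(u/\sqrt2)$ is never reached. (As a side remark, real integrals of the type $\int_0^\infty(1+\lambda^2)^{-1}e^{-c\lambda^2}d\lambda$ produce the scaled complementary error function $e^{c}\,\mathrm{erfc}(\sqrt{c})$, not the Dawson function $\tfrac{\sqrt{\pi}}{2}e^{-z^2}\mathrm{erfi}(z)$, so even the anticipated endpoint of your step two would need different structure.)

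The repair is to apply the integral-representation trick to the half-Cauchy factor rather than to the Gaussian: write $(1+\lambda^2)^{-1}=\int_0^\infty e^{-(1+\lambda^2)t}\,dt$, integrate out $\lambda$ (a generalized inverse Gaussian integral) to obtain $\int_0^\infty K_0\bigl(\sqrt{2t}\,|x|/\tau\bigr)e^{-t}\,dt$ up to constants, then invoke the representation $K_0(z)=\int_0^\infty e^{-z\cosh\zeta}\,d\zeta$ and change variables $u=\sqrt{2t}\cosh\zeta$, $v=t$ (Jacobian $1/\sqrt{u^2-2v}$). The $x$-dependence then factors as $e^{-u|x|/\tau}$, and the remaining inner integral $\int_0^{u^2/2}e^{-v}(u^2-2v)^{-1/2}\,dv$, after substituting $u^2-2v=2t^2$, is exactly (a constant times) $D_+\bigl(u/\sqrt2\bigr)$. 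Your closing sanity checks (the infinite spike at zero via $D_+(z)\sim 1/(2z)$, and the quadratic tail) are sound, but they do not substitute for the missing first step.
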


\begin{proof}
We have using the hierarchy of Equation~\eqref{eq:prior},
\begin{equation*}
\begin{split}
p_{HS}(x)=& \int_0^\infty \frac{1}{\sqrt{2\pi\lambda^2\tau^2}}  \exp\left(-\frac{x^2}{2\lambda^2 \tau^2}\right) \frac{2}{\pi} \frac{1}{1+\lambda^2} d\lambda
\end{split}
\end{equation*}
\vspace{-0.2cm}
\begin{equation*}
\begin{split}
=& \frac{\sqrt{2}}{\pi\sqrt{\pi}\tau} \int_0^\infty  \frac{1}{\lambda} \exp\left(-\frac{x^2}{2\lambda^2\tau^2}\right)  \int_0^\infty \exp(-(\lambda^2 +1) t) dt d\lambda\\
=& \frac{\sqrt{2}}{\pi\sqrt{\pi}\tau} \int_0^\infty \int_0^\infty  \frac{1}{\lambda} \exp\left(-\frac{x^2}{2\lambda^2\tau^2} - \lambda^2 t \right)  \exp(-t) dt d\lambda\\
=& \frac{\sqrt{2}}{\pi\sqrt{\pi}\tau} \int_0^\infty \left\{\int_0^\infty  \exp\left(-\frac{x^2}{2\omega\tau^2} - \omega t \right)\frac{1}{2} \omega^{-1} d\omega\right\} \exp(-t) dt\\
& (\text{substituting }\lambda^2 = \omega, \text{and using Fubini's theorem})\\
=& \frac{1}{\pi\sqrt{2\pi}\tau} \int_0^{\infty} 2 K_0\left(\sqrt{2t} \frac{|x|}{\tau}\right) \exp(-t) dt\\
&\text{(generalized inverse Gaussian integral \cite{barndorff1977exponentially};}\\
&\text{$K_0=$ modified Bessel function of the second kind of order 0)}\\
=&  \frac{\sqrt{2}}{\pi\sqrt{\pi}\tau} \int_0^{\infty} \left\{\int_0^{\infty} \exp\left(-\frac{|x|}{\tau} \sqrt{2t} \cosh \zeta\right) d\zeta\right\} \exp(-t) dt\\
&\text{(using identity 9.6.24 in \cite{abramowitz1964handbook})}\\
=&  \frac{\sqrt{2}}{\pi\sqrt{\pi}\tau} \int_0^{\infty} \int_0^{\infty} \exp\left(-\frac{|x|}{\tau} \sqrt{2t} \cosh \zeta -t \right) d\zeta dt.
\end{split} 
\end{equation*}

\noindent Let $\sqrt{2t} \cosh \zeta = u$ and $t=v$. The inverse transformations are, $\cosh^{-1} (u/\sqrt{2v}) = \zeta$ and $v=t$; with Jacobian:
\begin{eqnarray*}
J&=& \begin{vmatrix}
    \frac{\partial \zeta} {\partial u}  & \frac{\partial \zeta} {\partial v} \\
    & \\
    \frac{\partial t} {\partial u}& \frac{\partial t} {\partial v} 
  \end{vmatrix}= \begin{vmatrix}
    \frac{\partial \zeta} {\partial u}  & \frac{\partial \zeta} {\partial v} \\
    & \\
    0 & 1
  \end{vmatrix}= \left|\frac{\partial\zeta} {\partial u}\right| =  \frac{1}{\sqrt{u^2 - 2v}}.
\end{eqnarray*}
Thus, $p_{HS}(x)$ equals,
\begin{eqnarray*}
&&\frac{\sqrt{2}}{\pi\sqrt{\pi}\tau} \int_0^{\infty} \int_0^{u^2/2} \exp\left(-\frac{|x|}{\tau}u -v \right) \frac{1}{\sqrt{u^2 - 2v}} dv du\\
&=& \frac{\sqrt{2}}{\pi\sqrt{\pi}\tau} \int_0^{\infty} \exp\left(-\frac{|x|}{\tau}u\right)   \left\{\int_0^{u^2/2}  \frac{\exp(-v )}{\sqrt{u^2 - 2v}} dv \right\} du\\
&=& \frac{{2}}{\pi\sqrt{\pi}\tau} \int_0^{\infty} \exp\left(-\frac{|x|}{\tau}u\right) \left\{  e^{-u^2/2}\int_0^{\frac{u}{\sqrt{2}}}e^{t^2}dt \right\} du\\
&&(\text{substituting }u^2-2v=2t^2)\\
&=& \frac{{2}}{\pi\sqrt{\pi}\tau} \int_0^{\infty} \exp\left(-\frac{|x|}{\tau}u\right) D_{+}\left(\frac{u}{\sqrt{2}}\right) du. \qedhere\\
\end{eqnarray*}
\end{proof}
Proposition~\ref{prop_equiv_HS_Laplace} is our main result and the remainder of the paper discusses its implications in sparse signal recovery. Numerical validation of Proposition~\ref{prop_equiv_HS_Laplace} is presented in Fig.~\ref{fig:validation} by evaluating the horseshoe density using both the usual normal scale mixture (cf. Equation~\eqref{eq:prior}) and the newly derived Laplace scale mixture representations over $[-2,2]$. The point-wise maximum difference in density evaluation (in logarithmic scale) between the two representations is less than $10^{-10}$.   
\begin{figure*}[tb]
\centering
\includegraphics[width=\textwidth, height =0.18\textheight]{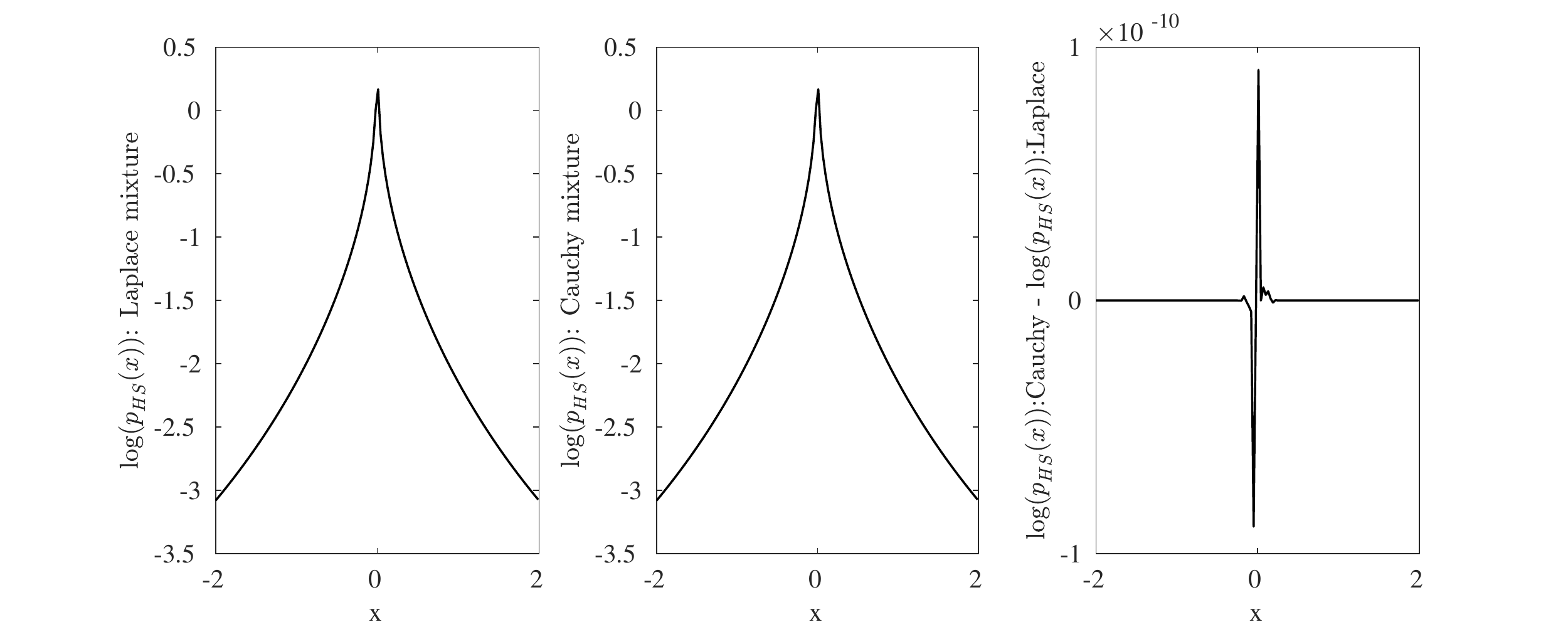}
\caption{
\label{fig:validation}
Numerical validation of Proposition~\ref{prop_equiv_HS_Laplace}. Left panel: logarithm of horseshoe density, $\log p_{HS}(x)$, evaluated numerically, using the Laplace scale mixture in Proposition~\eqref{prop_equiv_HS_Laplace}. Middle panel: logarithm of horseshoe density, $\log p_{HS}(x)$, evaluated numerically, using the normal scale mixture in Equation~\eqref{eq:prior}. Right panel: Difference between the logarithms of densities evaluated using normal and Laplace mixtures.} 
\end{figure*}

\section{Main Implications of Proposition~\ref{prop_equiv_HS_Laplace}: Complete Monotonicity of 
\label{main_implications}
the Horseshoe Density and Strong Concavity of the Penalty}
We begin by stating the definition of a completely monotone function and the celebrated Bernstein--Widder theorem \cite[Chapter 4]{widder2015laplace}.
\begin{definition}
A function $g(\cdot)$ is said to be completely monotone if $(-1)^k g^{(k)}(x)\ge 0$ for $x\ge0,\; k=0,1,\ldots,$ with superscripts $(k)$ denoting the order of derivatives and $g^{(0)}\equiv g$.
\end{definition}
\begin{theorem}\label{th:bernstein} (Bernstein--Widder). The function $g(\cdot)$ is completely monotone if and only if it is the Laplace transform of a positive measure $\gamma$, i.e., admits the representation: $g(x) = \int_{0}^{\infty} \exp(-xu) d\gamma(u),\; x\ge 0$.
\end{theorem}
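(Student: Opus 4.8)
The strategy is to prove the two implications separately: the ``if'' direction is a routine differentiation, while the ``only if'' direction carries all the weight. For the ``if'' direction, assume $g(x)=\int_0^\infty e^{-xu}\,d\gamma(u)$ with $\gamma$ a positive measure, so that $g$ is finite on $[0,\infty)$. For $x$ in any compact subinterval of $(0,\infty)$, the integrand $e^{-xu}$ and each of its $x$-derivatives $(-u)^k e^{-xu}$ are dominated by a $\gamma$-integrable function of $u$ (using the exponential decay, together with finiteness of $g$ slightly to the left of the interval), so one may differentiate under the integral to get $g^{(k)}(x)=\int_0^\infty(-u)^k e^{-xu}\,d\gamma(u)$, whence $(-1)^k g^{(k)}(x)=\int_0^\infty u^k e^{-xu}\,d\gamma(u)\ge 0$; the endpoint $x=0$ is handled by monotone convergence as $x\downarrow 0$. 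Thus $g$ is completely monotone.

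For the converse, suppose $g$ is completely monotone on $[0,\infty)$. Taking $k=1$ in the definition gives $g'\le 0$, so $g$ is non-increasing and in particular bounded, $0\le g(x)\le g(0)<\infty$. The first key step is to record a discrete analogue of complete monotonicity: for any step $h>0$ the $j$-th forward difference $(\Delta_h^j g)(x)=\sum_{i=0}^j(-1)^{j-i}\binom{j}{i}g(x+ih)$ equals $h^j g^{(j)}(\xi)$ for some $\xi$, by the mean-value form of the $j$-th difference, and therefore $(-1)^j(\Delta_h^j g)(x)=h^j\,(-1)^j g^{(j)}(\xi)\ge 0$. Hence, for each fixed $h$, the sequence $c_k:=g(kh)$, $k=0,1,2,\dots$, is a bounded completely monotone sequence in the Hausdorff sense, so by Hausdorff's solution of the (finite-interval) moment problem there is a positive measure $\mu_h$ on $[0,1]$ with $g(kh)=\int_0^1 s^k\,d\mu_h(s)$ for all $k\ge 1$. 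Substituting $s=e^{-hu}$ and pushing $\mu_h$ forward yields a positive measure $\gamma_h$ on $[0,\infty)$, of total mass at most $g(0)$, with $g(kh)=\int_0^\infty e^{-khu}\,d\gamma_h(u)$ for every positive integer $k$.

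It then remains to pass to the limit along a sequence $h_n\downarrow 0$. The family $\{\gamma_{h_n}\}$ has uniformly bounded total mass, and it is tight: for small $a>0$ one estimates $\gamma_{h_n}((T,\infty))\,(1-e^{-aT})\le\int_0^\infty(1-e^{-au})\,d\gamma_{h_n}(u)\le g(0)-g(a)+o(1)$, and $g(0)-g(a)\to 0$ as $a\downarrow 0$, so no mass escapes to infinity (mass accumulating near $0$ is harmless). By Helly's selection theorem a subsequence converges weakly to a positive measure $\gamma$. Fixing $x\ge 0$ and taking $k_n=\lfloor x/h_n\rfloor$, so that $k_n h_n\to x$, a standard splitting argument — using that $u\mapsto e^{-k_n h_n u}$ converges to $e^{-xu}$ uniformly on compacts and is bounded by $1$, while the tails of $\gamma_{h_n}$ are uniformly small — gives $\int_0^\infty e^{-xu}\,d\gamma(u)=\lim_n g(k_n h_n)=g(x)$ by continuity of $g$. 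Uniqueness of $\gamma$ follows from injectivity of the Laplace transform on positive measures. The main obstacle is exactly this limiting step: establishing tightness so that no mass leaks to $0$ or $\infty$, and interchanging the limits in $h_n$ and in the measures; the discretization step (Hausdorff) and the forward direction are both essentially mechanical. An alternative that avoids discretization altogether is to build the approximating measures directly from the Post--Widder inversion formula, setting $\gamma_n(du)=\frac{(-1)^n}{n!}(n/u)^{n+1}g^{(n)}(n/u)\,du$ (a positive measure by complete monotonicity) and again extracting a weak subsequential limit, at the cost of a more delicate convergence analysis.
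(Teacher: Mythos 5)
The paper does not prove this statement at all: it is quoted as the classical Bernstein--Widder theorem with a citation to Widder's \emph{The Laplace Transform} (Chapter 4), so there is no internal proof to compare against. Your argument is, in essence, the standard textbook proof (the one underlying the cited source): the easy direction by differentiation under the integral sign, and the hard direction by discretizing to the completely monotone sequence $g(kh)$, invoking Hausdorff's solution of the moment problem on $[0,1]$, pushing forward via $s=e^{-hu}$, and passing to the limit $h\downarrow 0$ with Helly selection plus a tightness estimate. The outline is correct, and your tightness bound and the observation that mass escaping near $u=0$ is harmless are both right; note also that for fixed $x>0$ vague convergence already suffices, since $e^{-xu}$ vanishes at infinity and $e^{-k_nh_nu}\le e^{-xu/2}$ eventually, so tightness is really only needed to control the point $x=0$. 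Two points deserve tightening. First, your grid identity $g(kh)=\int_0^\infty e^{-khu}\,d\gamma_h(u)$ holds only for $k\ge 1$ (because mass of $\mu_h$ at $s=0$ is discarded), so the case $x=0$ cannot be obtained by taking $k_n=\lfloor x/h_n\rfloor=0$ as written; instead conclude the representation for all $x>0$ and then get $g(0)=\lim_{x\downarrow 0}g(x)=\int_0^\infty d\gamma(u)$ by monotone convergence and the assumed continuity of $g$ at $0$. Second, be explicit that the real weight of the theorem has been shifted onto Hausdorff's moment theorem, which you import as a black box; that is a legitimate move, but it should be cited as such (it is itself equivalent in difficulty to the result being proved), or else replaced by the Post--Widder route you mention at the end, carried out in full.
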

Since the Dawson function is positive over $(0,\infty)$, and the horseshoe density is symmetric around zero, Proposition~\ref{prop_equiv_HS_Laplace} in conjunction with Theorem~\ref{th:bernstein} immediately establishes the complete monotonicity of $p_{HS}(|x|)$.

In a penalized likelihood setting, it is customary to view the negative log likelihood as the loss function and the negative log prior density as the penalty in an optimization problem. The following result, due to Bochner \cite[Theorem 4.1.5]{bochner2005harmonic} establishes the strong concavity of the horseshoe penalty.
\begin{theorem}\label{lem:bochner} (Bochner). The function $g(x) = \exp(-a \phi(x)), x\ge 0,$ is completely monotone for every $a>0$ if and only if $\phi'(x)=\phi^{(1)}(x)$ is completely monotone.
\end{theorem}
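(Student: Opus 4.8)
The plan is to prove the two implications separately; the forward implication (``only if'') is a short limiting argument, while the reverse implication (``if'') carries the substantive content. Throughout I would use the Bernstein--Widder characterisation (Theorem~\ref{th:bernstein}) and the standard closure properties of the completely monotone cone: it is closed under nonnegative linear combinations, under products, and under pointwise limits, and if $g$ is completely monotone then so is $-g'$.

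For the reverse implication, assume $\phi^{(1)}$ is completely monotone and fix $a>0$. Put $f(t)=\exp(-at)$, which is completely monotone on all of $\mathbb{R}$, so that $g=f\circ\phi=\exp(-a\phi)$. I would expand $g^{(k)}$ by Fa\`a di Bruno's formula,
\[
g^{(k)}(x)=\sum \frac{k!}{m_1!\cdots m_k!}\, f^{(m)}\bigl(\phi(x)\bigr)\prod_{j=1}^{k}\left(\frac{\phi^{(j)}(x)}{j!}\right)^{m_j},\qquad m:=\sum_{j}m_j,
\]
where the sum runs over $m_1,\dots,m_k\ge0$ with $\sum_j j\,m_j=k$, and then track signs. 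Since $f^{(m)}(t)=(-a)^m e^{-at}$ we have $\operatorname{sgn} f^{(m)}=(-1)^m$; since $\phi^{(1)}$ is completely monotone, $(-1)^{j-1}\phi^{(j)}\ge0$ for every $j\ge1$, so $\prod_j(\phi^{(j)})^{m_j}$ carries the sign $(-1)^{\sum_j(j-1)m_j}=(-1)^{k-m}$. Hence every summand has sign $(-1)^m(-1)^{k-m}=(-1)^k$, the combinatorial coefficients are positive, and therefore $(-1)^k g^{(k)}(x)\ge 0$ for all $k\ge0$ and $x\ge0$; that is, $\exp(-a\phi)$ is completely monotone, and $a>0$ was arbitrary. (An alternative I would keep in reserve: write $\phi^{(1)}(x)=\int_0^\infty e^{-xs}\,d\gamma(s)$ via Theorem~\ref{th:bernstein}, integrate to recover $\phi$, and exhibit $\exp(-a\phi)$ as a constant times $\exp(h)$ with $h$ completely monotone, concluding from the cone/product/limit properties and the power series of $\exp$; this route needs a mild integrability check on $\gamma$ near the origin that the Fa\`a di Bruno computation sidesteps.)

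For the forward implication, assume $g_a:=\exp(-a\phi)$ is completely monotone for every $a>0$. Each $g_a$ is then positive and smooth on $(0,\infty)$, so $\phi=-a^{-1}\log g_a$ is smooth there; differentiating in $x$ gives $-a^{-1}g_a'(x)=\phi^{(1)}(x)\exp(-a\phi(x))$. Now $-g_a'$ is completely monotone (derivative of a completely monotone function, with a sign change), hence so is $-a^{-1}g_a'=\phi^{(1)}\exp(-a\phi)$. Letting $a\downarrow 0$, $\exp(-a\phi(x))\to1$ pointwise on $(0,\infty)$, so $\phi^{(1)}(x)=\lim_{a\downarrow0}\bigl(-a^{-1}g_a'(x)\bigr)$, a pointwise limit of completely monotone functions; by closure of the cone under pointwise limits, $\phi^{(1)}$ is completely monotone, the value at $x=0$ being supplied by continuity.

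I expect the main obstacle to be the sign bookkeeping in the reverse implication: the argument hinges on verifying that \emph{every} term of Fa\`a di Bruno's expansion carries the common sign $(-1)^k$, which is exactly the short computation combining $\operatorname{sgn} f^{(m)}=(-1)^m$ with $\operatorname{sgn}\phi^{(j)}=(-1)^{j-1}$. The forward implication is comparatively routine once the closure of completely monotone functions under pointwise limits is invoked. A last minor point is to fix the domain conventions (complete monotonicity on $[0,\infty)$ versus on $(0,\infty)$ with a boundary-continuity clause) so that the differentiation and the $a\downarrow0$ limit are fully justified.
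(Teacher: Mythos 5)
Your two-directional argument is correct, but note that the paper itself does not prove this statement at all: it is quoted with a citation to Bochner's book (Theorem 4.1.5 there), so there is no in-paper proof to compare against. What you have written is essentially the classical proof of that cited result. The ``if'' direction via Fa\`a di Bruno is sound: the sign bookkeeping $\operatorname{sgn} f^{(m)}=(-1)^m$ together with $(-1)^{j-1}\phi^{(j)}\ge 0$ (which is exactly complete monotonicity of $\phi^{(1)}$) gives every summand the sign $(-1)^k$, and the $k=0$ case is immediate from positivity of the exponential; this is the standard argument that the composition of a completely monotone function with a function whose derivative is completely monotone is again completely monotone. The ``only if'' direction via $-a^{-1}g_a'=\phi^{(1)}\exp(-a\phi)$ and the limit $a\downarrow 0$ is also the standard route. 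The two points you should make explicit if this were to be written out in full are: (i) closure of the completely monotone cone under pointwise limits is itself a nontrivial fact, usually obtained from Theorem~\ref{th:bernstein} together with Helly's selection theorem (vague convergence of the representing measures), so it should be stated as a lemma or given a citation rather than treated as folklore; and (ii) the smoothness of $\phi$ on $(0,\infty)$ used to differentiate $\phi=-a^{-1}\log g_a$ deserves the one line you sketch (completely monotone functions are $C^\infty$, indeed analytic, on $(0,\infty)$, and $g_a>0$ there). With those acknowledged, the proposal is a complete and correct proof of the quoted theorem, matching the textbook treatment rather than anything specific to this paper.
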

Since $p_{HS}(|x|)$ is completely monotone, Theorem~\ref{lem:bochner} implies the corresponding penalty function $\mathrm{pen}_{HS}(|x|)=-\log p_{HS}(|x|)$  is a Bernstein function (integral of a completely monotone function) and thus strongly concave. In other words, $\mathrm{pen'}_{HS}(|x|) = -(p'_{HS}(|x|))/(p_{HS}(|x|))$ is completely monotone. This property of the penalty has important implications on the convergence of optimization algorithms for penalized likelihood approaches, which we discuss next.

\section{Sparse Signal Recovery via LLA}
\subsection{Local Linear Approximation of Non-convex Penalties}
Consider the penalized estimation problem:
\begin{equation}
\argmin_{x}\left(\ell(x; y) + \mathrm{pen}(|x|)\right), \label{eq:pen}
\end{equation}
where $\ell(x;y)= -\log \mathcal{L}(x;y)$ is a loss function providing some measure of model fit to the data $y$, usually a negative log likelihood function and $\mathrm{pen}(|x|)$ is a penalty symmetric around zero, with $|\cdot|$ denoting the vector $\ell_1$ norm. Although convex penalties are simpler to study, both theoretically and computationally, concave penalties enjoy several desirable statistical properties such as near-unbiasedness, minimax optimal properties and sparsity~\cite{fan2001variable, fan2014strong}. Nevertheless, the optimization problem becomes much more challenging, and usually, it is only possible to find a local optimum. Required conditions for global optimum under concave penalties are discussed by~\cite{mazumder2011sparsenet}.

The local linear approximation (LLA) algorithm of \cite{zou2008one} provides one approach for solving the problem in Equation~\eqref{eq:pen} for concave penalties. LLA iterations proceed via a first order Taylor approximation of the penalty function, as follows
\begin{equation*}
   \mathrm{pen}(|x|) \approx   \mathrm{pen}(|x_{(k)}|) + \mathrm{pen'}(|x_{(k)}|)(|x|-|x_{(k)}|),
\end{equation*}
\noindent where $x_{(k)}$ is the current parameter value. With this approximation, and assuming the loss (but not necessarily the penalty) is convex, the non-convex optimization problem in Equation~\eqref{eq:pen} reduces to the convex problem
\begin{equation}
\label{skeleton_beta_update}
    x_{(k+1)} = \underset{x}{\text{argmin}}\left(\ell(x;y)+\lambda_{(k)}|x|\right),
\end{equation}
where $\lambda_{(k)}=\mathrm{pen'}(|x_{(k)}|)$. With the common squared error loss function, $\ell(x;y) = |x-y|^2$, Equation~\eqref{skeleton_beta_update} defines a lasso problem \cite{tibshirani1996regression}, which can be solved very efficiently. In addition, the resultant solution is naturally sparse.

\subsection{Implications of Complete Monotonicity on LLA}
Numerical convergence and statistical properties of LLA estimates are non-trivial objects of study for general non-convex penalties and properties of LLA for general penalties are largely unknown. Nevertheless, in the special case when the penalty under consideration is a Bernstein function, certain simplifications ensue due to the following result of \cite{zou2008one}.
\begin{theorem} (Zou and Li). 
Consider a penalty $\mathrm{pen}(|x|)$ such that $\exp(-\mathrm{pen}(|x|))$ is completely monotone. Then, under a Gaussian likelihood, the LLA algorithm is equivalent to an EM algorithm for MAP estimation under a prior, not necessarily proper, with density $p_{X}(x)\propto \exp(-\mathrm{pen}(x))$.
\end{theorem}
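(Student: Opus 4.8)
The plan is to use Theorem~\ref{th:bernstein} to exhibit the prior $p_X(x)\propto\exp(-\mathrm{pen}(|x|))$ as a scale mixture of Laplace densities, and then to recognize the LLA weight $\lambda_{(k)}=\mathrm{pen'}(|x_{(k)}|)$ as precisely the conditional expectation of the latent scale that appears in the E-step of an EM algorithm; the M-step will then collapse to the weighted $\ell_1$ problem of Equation~\eqref{skeleton_beta_update}. First I would invoke complete monotonicity of $\exp(-\mathrm{pen}(|x|))$: by Bernstein--Widder there is a positive measure $\gamma$ on $(0,\infty)$ with
\[ \exp(-\mathrm{pen}(t)) = \int_0^\infty e^{-ut}\,d\gamma(u), \qquad t\ge 0. \]
Reading $t=|x_i|$ coordinatewise, so that the additively separable penalty $\mathrm{pen}(|x|)=\sum_i\mathrm{pen}(|x_i|)$ is handled by attaching an independent latent rate $u_i$ to each $x_i$, this realizes the prior as the marginal of the hierarchy in which, given $u=(u_1,\dots,u_p)$, the $x_i$ are independent double exponential with rate $u_i$ and the $u_i$ are independent with law $\propto d\gamma$. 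This is the abstract form of the concrete Laplace mixture derived for the horseshoe in Proposition~\ref{prop_equiv_HS_Laplace}, where $d\gamma(u)\propto D_{+}(u/\sqrt2)\,du$.

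Next I would set up EM for the MAP problem $\argmin_x(\ell(x;y)+\mathrm{pen}(|x|))$ with $u$ as missing data. The complete-data log posterior is $\log p(y\mid x)-\sum_i u_i|x_i|$ plus terms that do not involve $x$, so the E-step yields, up to an $x$-free constant, $Q(x\mid x_{(k)})=\log p(y\mid x)-\sum_i E[u_i\mid x_{i,(k)}]\,|x_i|$. The crux is evaluating $E[u_i\mid x_i]$. Since the conditional density of $u_i$ given $x_i$ is $\propto e^{-u_i|x_i|}\,d\gamma(u_i)$, writing $h(t)=\int_0^\infty e^{-ut}\,d\gamma(u)=e^{-\mathrm{pen}(t)}$ gives $E[u_i\mid x_i]=-h'(|x_i|)/h(|x_i|)=\mathrm{pen'}(|x_i|)$, the differentiation under the integral being legitimate because the Laplace transform of a positive measure is $C^\infty$ on $(0,\infty)$. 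As a consistency check, this ratio of Laplace transforms is itself completely monotone in $|x_i|$, matching the discussion following Theorem~\ref{lem:bochner}.

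Hence $Q(x\mid x_{(k)})$ equals, up to a constant, $\log p(y\mid x)-\sum_i\mathrm{pen'}(|x_{i,(k)}|)\,|x_i|$, and the M-step $x_{(k+1)}=\argmax_x Q(x\mid x_{(k)})$ is exactly $\argmin_x\bigl(\ell(x;y)+\sum_i\lambda_{i,(k)}|x_i|\bigr)$ with $\lambda_{i,(k)}=\mathrm{pen'}(|x_{i,(k)}|)$, i.e.\ the LLA update~\eqref{skeleton_beta_update}. Since the two recursions coincide and are started identically they generate the same iterates, and both are monotone ascent schemes for the MAP objective (EM by Jensen's inequality, LLA because concavity of $\mathrm{pen}$ makes the tangent line a majorant of the penalty), which is the asserted equivalence.

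I expect the main obstacle to be the two regularity caveats rather than the algebra. One is the interchange of differentiation and integration used to identify $E[u_i\mid x_i]$ with $\mathrm{pen'}(|x_i|)$, which I would justify by the smoothness of Laplace transforms on $(0,\infty)$ together with monotone or dominated convergence; the nondifferentiability of $|x_i|$ at the origin is not a genuine difficulty, since the $\ell_1$ subproblem either fixes a coordinate at an exact zero, where the update is unambiguous, or leaves $|x_{i,(k)}|>0$. The other is that $\gamma$ need not be a probability measure, so $p_X$ may be improper and the marginal on $u$ is pinned down only up to an $x$-free factor; but because only the $x$-dependent part of $Q$ enters the M-step, this bookkeeping does not affect the conclusion.
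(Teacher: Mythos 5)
The paper states this theorem without proof, simply citing Zou and Li, so there is no internal proof to compare against; your reconstruction is essentially the standard argument from that reference (and from the Polson et al.\ mixture viewpoint): Bernstein--Widder turns $\exp(-\mathrm{pen}(|x_i|))$ into $\int_0^\infty e^{-u|x_i|}d\gamma(u)$, the E-step expectation is $E[u_i\mid x_{i,(k)}]=-h'(|x_{i,(k)}|)/h(|x_{i,(k)}|)=\mathrm{pen'}(|x_{i,(k)}|)$, and the M-step is exactly the weighted $\ell_1$ update of Equation~\eqref{skeleton_beta_update}, so the two iterations coincide; this is correct. One small slip worth fixing: if you phrase the hierarchy with \emph{normalized} Laplace conditionals $(u_i/2)e^{-u_i|x_i|}$, the mixing law must be $\propto u^{-1}d\gamma(u)$ rather than $\propto d\gamma(u)$ (otherwise the conditional of $u_i$ given $x_i$ picks up an extra factor of $u_i$ and the E-step weight is no longer $\mathrm{pen'}$); your actual computation uses the correct conditional $\propto e^{-u_i|x_i|}d\gamma(u_i)$, so the conclusion is unaffected.
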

Thus, an application of Proposition~\ref{prop_equiv_HS_Laplace} immediately establishes the LLA algorithm as an EM algorithm for horseshoe penalties. This result is of practical significance, since the convergence properties of LLA can then be studied using the same tools for EM, which are better understood~\cite{wu1983convergence}. This equivalence is also established by \cite{polson2016mixtures}, who provide a variational interpretation. Thus, it is established that the MAP estimate under the horseshoe prior is sparse; due to the soft thresholding operator in LLA resulting in exact zeros. Further, to our knowledge, this is the first viable algorithm for finding the MAP estimate under the horseshoe.

\subsection{Implementing LLA under the Horseshoe Penalty}
Consider now the practicalities of LLA under the horseshoe penalty. For implementation, one needs to compute $\lambda_{(k)}=\mathrm{pen'}_{HS}(|x_{(k)}|)$. Differentiating the expression of Proposition~\ref{prop_equiv_HS_Laplace} under the integral sign, we have:
\begin{equation}
\label{derivative_peanlty}
   \lambda_{(k)} = \mathrm{pen'}_{HS}(|x_{(k)}|) = \frac{\int_{0}^\infty \frac{u}{\tau} \exp\left(-u\frac{|x_{(k)}|}{\tau}\right)D_+\left(\frac{u}{\sqrt{2}}\right)du}{\int_{0}^\infty\exp\left(-u\frac{|x_{(k)}|}{\tau}\right)D_+\left(\frac{u}{\sqrt{2}}\right)du}.
\end{equation}
We provide numerical details of evaluating $\lambda_{(k)}$ in the Appendix. Subsequent to this, we use efficient solvers based on pathwise coordinate optimization~\cite{friedman2007pathwise}, for solving the lasso problem in Equation~\eqref{skeleton_beta_update}. We note here that using the expression from Equation~\eqref{derivative_peanlty} is not necessary for evaluating $\mathrm{pen'}_{HS}$, strictly speaking, and other alternative numerical methods can be used. However, in our experience, Equation~\eqref{derivative_peanlty} is numerically very stable.

\section{Numerical Experiments}\label{sec:num}
\subsection{Sparse Normal Means Model}
\label{normal_means_model}

Revisiting the regression model in Equation~\eqref{regression_model}, and setting $p=n,\,\Phi =\mathbf{I}_n$, we recover the normal means model $y_i\stackrel{ind}\sim\mathcal{N}(x_i,\sigma^2)$. In our experiments, we set $n=50,\,\sigma^2=1$ and the components in $x$ as, $x_1=x_2=3,\,x_3,\ldots,x_{50}=0$; $y_i\text{ is sampled from }\mathcal{N}(x_i,\sigma^2)$. With the generated data, optimizing Equation~\eqref{skeleton_beta_update} with $\ell(x;y) = \frac{1}{2}\sum_{i=1}^n(y_i - x_i)^2$, with pathwise coordinate optimization~\cite{friedman2007pathwise}, we have :
\begin{equation*}
    \begin{split}
         x_{(k+1)} &= \underset{x}{\text{argmin}}\left(\frac{1}{2}\sum_{i=1}^n(y_i-x_i)^2+\sum_{i=1}^n\lambda_{i(k)}|x_i|\right),\\
    \text{yielding, }  x_{i(k+1)} &= S(y_i,\lambda_{i(k)}),\text{ where},
    \end{split}
\end{equation*}
\begin{equation}
    \label{soft_threshold}
    S(\beta,\gamma) = \begin{cases}\beta - \gamma,&\text{ if }\beta>0\text{ and } \gamma<|\beta|,\\
    \beta + \gamma,&\text{ if }\beta<0\text{ and } \gamma<|\beta|,\\
    0, &\text{ if }\gamma\geq|\beta|,
    \end{cases}
\end{equation}
is the soft thresholding operator. In the above, $x_{i(k+1)}$ denotes the value of $i^{th}$ component of $x$, at $(k+1)^{th}$ iteration. Similarly, $\lambda_{i(k)}$ denotes the derivative of the penalty in Equation~\eqref{derivative_peanlty}, evaluated at $x_{i(k)}$. We estimate the normal means, $\hat{x}=\{\hat{x}_i\}$, under the horseshoe penalty using LLA; and compare with the posterior mean of  $x$ under the horseshoe prior obtained by Markov chain Monte Carlo (HS-MCMC), the minimax concave penalty or MCP \cite{zhang2010nearly}, the smoothly clipped absolute deviation penalty or SCAD \cite{fan2001variable} and the lasso \cite{tibshirani1996regression} penalties, using the \texttt{R} packages \texttt{horseshoe}~\cite{van2016horseshoe}, \texttt{sparsenet}~\cite{mazumder2011sparsenet}, \texttt{ncvreg}~\cite{ncvreg} and \texttt{glmnet}~\cite{glmnet} respectively. MCP are SCAD are non-convex penalties, while lasso is convex. Under all the point estimation methods, the respective tuning parameters are chosen using 10-fold cross validation; whereas for HS-MCMC, we consider $\tau\sim\mathcal{C}^+(0,1)$, a total of $1.5\times 10^4$ posterior samples with 5000 samples as burnin, and 50\% credible intervals for variable selection~\cite{li2019graphical}. For the LLA procedure, we start with $x_{i(0)}=1,\,\forall  i$ and set the stopping criterion as ${\sum_{i=1}^n\left(\hat{x}_{i(k+1)}-\hat{x}_{i(k)}\right)^2}<10^{-6}$. To compare the estimates, we compute the in sample sum of squared errors or $\text{SSE}=\sum_{i=1}^n(\hat{x}_i-x_i)^2$, out of sample prediction SSE (pSSE) $=\norm{y_\mathrm{out}-\Phi\hat{x}}^2$; for $y_\mathrm{out}$ generated similarly as $y$, true negative (positive) rates, for zeros (non-zeros) in $x$ identified correctly as TNR, TPR  respectively and the computational time in seconds. We estimate $\hat{x}$ for 50 replications, and present the mean and standard deviation (in parentheses) of the performance measures in Table~\ref{normal_means_result}. We observe that HS-LLA has the best pSSE and its SSE and TNR are competitive with the resultant MCMC estimate. The good statistical performance of the posterior mean estimate obtained by MCMC is not surprising, since it is the optimal Bayes estimate under $\ell_2$ loss. However, the MCMC estimate is about 3 times slower than than HS-LLA, for negligible gain in statistical performance.  
\begin{table}[!htb]
\centering
\caption{Comparison of results for competing procedures, in a sparse normal means model.}
\label{normal_means_result}
\begin{tabular}{|c|ccccc|}
\hline
         & HS-LLA           & HS-MCMC                                     & SCAD                                                      & MCP                                                       & lasso                                                     \\ \hline
SSE      & \begin{tabular}[c]{@{}c@{}}10.051\\ (6.721)\end{tabular}   & \begin{tabular}[c]{@{}c@{}}9.757\\ (3.803)\end{tabular} & \begin{tabular}[c]{@{}c@{}}26.22\\ (18.629)\end{tabular}  & \begin{tabular}[c]{@{}c@{}}32.724\\ (17.794)\end{tabular} & \begin{tabular}[c]{@{}c@{}}28.121\\ (18.177)\end{tabular}  \\
pSSE     & \begin{tabular}[c]{@{}c@{}}59.238\\ (16.771)\end{tabular}  & \begin{tabular}[c]{@{}c@{}}61.381\\ (12.506)\end{tabular}& \begin{tabular}[c]{@{}c@{}}75.641\\ (25.609)\end{tabular} & \begin{tabular}[c]{@{}c@{}}80.358\\ (22.146)\end{tabular} & \begin{tabular}[c]{@{}c@{}}76.082\\ (23.228)\end{tabular} \\
TNR      & \begin{tabular}[c]{@{}c@{}}0.982\\ (0.023)\end{tabular}    & \begin{tabular}[c]{@{}c@{}}0.992\\ (0.013)\end{tabular}& \begin{tabular}[c]{@{}c@{}}0.633\\ (0.336)\end{tabular}   & \begin{tabular}[c]{@{}c@{}}0.65\\ (0.325)\end{tabular}    & \begin{tabular}[c]{@{}c@{}}0.464\\ (0.366)\end{tabular}   \\
TPR      & \begin{tabular}[c]{@{}c@{}}0.64\\ (0.351)\end{tabular}     & \begin{tabular}[c]{@{}c@{}}0.53\\ (0.409)\end{tabular}& \begin{tabular}[c]{@{}c@{}}0.86\\ (0.268)\end{tabular}    & \begin{tabular}[c]{@{}c@{}}0.84\\ (0.31)\end{tabular}     & \begin{tabular}[c]{@{}c@{}}0.93\\ (0.202)\end{tabular}     \\
Time (s) & 0.351  & 1.07 & 0.268 & 0.31 & 0.202\\ \hline
\end{tabular}
\end{table}

\subsection{Sparse Linear Regression}
\label{linear_regression}
 We set $n=50,\,p=100$ and the components $x$ as $x_1 =3,\,x_2 = 1.5,\,x_3=2,\,x_4,\ldots,x_{100}=0$. The entries in columns of $\Phi$, where $\Phi=[\Phi_1,\ldots,\Phi_p]$ are generated from a multivariate Gaussian, with zero mean, such that the covariance between $\Phi_k$ and $\Phi_l$, for $1\leq k,\,l\leq p$ is $0.5^{|k-l|}$.  The observations $y_j$ are generated as $y_j\sim\mathcal{N}(\Phi x,1)$, for $j\in\{1,\ldots,n\}$. With the generated data, optimizing~\eqref{skeleton_beta_update} with $\ell(x;y) = \frac{1}{2}\sum_{i=1}^n(y_i -\Phi x)^2$, with pathwise coordinate optimization~\cite{friedman2007pathwise}, we have :
\begin{equation*}
    \begin{split}
         x_{(k+1)} =& \underset{x}{\text{argmin}}\left(\frac{1}{2}\sum_{j=1}^n(y_i-\Phi x)^2+\sum_{i=1}^p\lambda_{i(k)}|x_i|\right),\\
    \text{yielding, }  x_{i(k+1)} &= \frac{1}{\sum_{j=1}^n \phi_{ji}^2}S\left(\sum_{j=1}^n \phi_{ji}\left(y_j - \widetilde{y}_{ji}\right),\lambda_{i(k)}\right),\\
   \text{where},\, \widetilde{y}_{ji}& = \underset{l\neq i}{\sum}\phi_{jl}\widetilde{x}_{l}\,,\,\widetilde{x}_{l}=\begin{cases}x_{l(k+1)},&\text{ if }i>l,\\ x_{l(k)},& \text{ else.}\end{cases}
    \end{split}
\end{equation*}
and the definition of $S(\cdot,\,\cdot)$ follows from Equation~\eqref{soft_threshold}. For the LLA procedure, we start with $x_{i(0)}=0.1,\,\forall  i$ and set the stopping criterion as mentioned in Section~\ref{normal_means_model}. We estimate $\hat{x}$ for 50 replications, and present the mean and standard deviation (in parentheses) of the performance measures in Table~\ref{linear_regression_results}. From these results, we observe that HS-MCMC has the best TNR, followed by HS-LLA and other measures under HS-LLA are comparable to the best performer, which is SCAD for this setting. 
\begin{table}[!b]
\centering
\caption{Comparison of results for competing procedures, in sparse linear regression.}
\label{linear_regression_results}
\begin{tabular}{|c|ccccc|}
\hline
         & HS-LLA    & HS-MCMC                                                & SCAD                                                      & MCP                                                       & lasso                                                     \\ \hline
SSE      & \begin{tabular}[c]{@{}c@{}}0.207\\ (0.2)\end{tabular}      & \begin{tabular}[c]{@{}c@{}}0.183\\ (0.175)\end{tabular}& \begin{tabular}[c]{@{}c@{}}0.118\\ (0.108)\end{tabular}   & \begin{tabular}[c]{@{}c@{}}0.182\\ (0.166)\end{tabular}   & \begin{tabular}[c]{@{}c@{}}0.281\\ (0.165)\end{tabular}   \\
pSSE     & \begin{tabular}[c]{@{}c@{}}59.612\\ (12.918)\end{tabular}  & \begin{tabular}[c]{@{}c@{}}58.038\\ (11.45)\end{tabular}& \begin{tabular}[c]{@{}c@{}}56.607\\ (11.753)\end{tabular} & \begin{tabular}[c]{@{}c@{}}59.233\\ (14.665)\end{tabular} & \begin{tabular}[c]{@{}c@{}}65.817\\ (16.688)\end{tabular} \\
TNR      & \begin{tabular}[c]{@{}c@{}}0.987\\ (0.019)\end{tabular}    & \begin{tabular}[c]{@{}c@{}}1\\ (0)\end{tabular}& \begin{tabular}[c]{@{}c@{}}0.975\\ (0.026)\end{tabular}   & \begin{tabular}[c]{@{}c@{}}0.975\\ (0.064)\end{tabular}   & \begin{tabular}[c]{@{}c@{}}0.92\\ (0.077)\end{tabular}    \\
TPR      & \begin{tabular}[c]{@{}c@{}}1\\ (0)\end{tabular}            & \begin{tabular}[c]{@{}c@{}}1\\ (0)\end{tabular}& \begin{tabular}[c]{@{}c@{}}1\\ (0)\end{tabular}           & \begin{tabular}[c]{@{}c@{}}1\\ (0)\end{tabular}           & \begin{tabular}[c]{@{}c@{}}1\\ (0)\end{tabular}           \\
Time (s) & 0.26             &   4.23                                      & 0.05                                                      & 0.29                                                      & 0.1                                                       \\ \hline
\end{tabular}
\end{table}

We again observe from the results presented in Table~\ref{linear_regression_results} that the performance measures are comparable for HS-MCMC and HS-LLA. However, HS-MCMC is drastically slower and also requires a choice for the credible interval width for variable selection; which is not the case under HS-LLA. This underlines the scalability of our approach for Bayesian sparse signal recovery.
\section{Concluding Remarks}
We provide a new Laplace mixture representation of the popular horseshoe prior. While this can be used as an alternative generative hierarchy for the horseshoe, similar to the usual representation of half Cauchy scale mixture of normals, our results show horseshoe prior enjoys some particularly nice properties due to the resultant complete monotonicity of the density function. This is relevant because not all normal scale mixtures are also automatically Laplace mixtures. For example, the class of power exponential densities, $p_{X}(x) \propto \exp(-|x|^\alpha)$, can be written as a normal scale mixture with respect to a positive $\alpha/2$ stable variable for $\alpha \in (1,2)$ \cite{west1987scale}, but these densities are not Laplace mixtures. In addition, our results imply a numerically stable LLA approach for penalized likelihood estimation problem under the horseshoe, and establishes LLA as  special case of EM, thus proving LLA in this case results in the MAP estimate, which is sparse, due to the soft thresholding step in LLA. Deeper studies of these connections should be considered future work.
\appendix
\emph{The hyperparameter $\tau$.}  Proposition~\ref{prop_equiv_HS_Laplace} is presented for a fixed $\tau>0$. In both fully Bayesian and penalized likelihood settings, $\tau$ needs to be tuned for practical implementation. We used cross validation for our numerical results in Section~\ref{sec:num}. Fully Bayesian approaches require a hyperprior on $\tau$, popular among them the half-Cauchy~\cite{polson2012half}. Other techniques for choosing $\tau$ include Akaike information criterion or AIC~\cite{lingjaerde2022scalable}, marginal likelihood~\cite{bhadra2022graphical} and effective model size~\cite{piironen2017sparsity}. For a detailed survey, see \cite[Section 5]{bhadra2019lasso}.

\emph{Fast computation of $\lambda_{(k)}$.} Computing the ratio of integrals in Equation~\eqref{derivative_peanlty}, at every iteration of the optimization procedure might be perceived as a potential drawback, for two possible reasons: (a) closed form of the integrals do not exist and (b) numerical integration might be a time intensive exercise. Luckily, near-minimax rational approximation of the Dawson function exists~\cite{lether1997constrained}, and is given by:
\begin{equation*}
    \frac{D_+(u)}{u} \approx G(u) = \frac{1+\frac{33}{232}u^2+\frac{19}{632}u^4+\frac{23}{1471}u^6}{1+\frac{517}{646}u^2+\frac{58}{173}u^4+\frac{11}{262}u^6+\frac{46}{1471}u^8}\cdot
\end{equation*}
\noindent This approximation has a maximum relative error of $6.1\times 10^{-4}$ for $|u|<\infty$, is exact at $u=0$ and has the asymptotic property, $uG(u)\sim D_+(u)$, as $u\rightarrow \infty$~\cite{lether1997constrained}. Therefore, having the values of $D_+(u)$ pre-computed and stored on a large enough and fine grid of $u$, a simple Riemann sum can be used to evaluate the expressions in Equation~\eqref{derivative_peanlty}. Lipschitz continuity of the rational approximation can also be established with some further algebra, aiding a selection of grid size.
\IEEEtriggeratref{17}
\bibliographystyle{IEEEtran}
\bibliography{ref}

\end{document}